\let\mathscr\mathbscr
\newcolumntype{x}[1]{>{\centering\arraybackslash}p{#1}}
\declaretheoremstyle[headfont=\bfseries, 
    bodyfont=\normalfont]{normalhead}
\newtheorem{Theorem}{Theorem}
\newtheorem{Lemma}{Lemma}
\newtheorem{Definition}{Definition}
\begin{document}

\title{ Seeded Graph Matching: Efficient Algorithms and Theoretical Guarantees\footnote{This work was presented in the $51^{st}$ Annual Asilomar Conference on Signals, Systems, and Computers.}}

% author names and affiliations
% use a multiple column layout for up to three different
% affiliations

\author{Farhad Shirani, Siddharth Garg, and Elza Erkip
\\
\{fsc265,siddharth.garg,elza\}@nyu.edu\\
Department of Electrical and Computer Engineering\\
New York University, New York, New York, 11201 \\\date{} }

\maketitle

% As a general rule, do not put math, special symbols or citations
% in the abstract
\begin{abstract}
\hspace{0.01in}
In this paper, a new information theoretic framework for graph matching is introduced. Using this framework, the graph isomorphism and seeded graph matching problems are studied. 
The maximum degree algorithm for graph isomorphism is analyzed and sufficient conditions for successful matching are rederived using type analysis. Furthermore, a new seeded matching algorithm with polynomial time complexity is introduced. The algorithm uses `typicality matching' and techniques from point-to-point communications for reliable matching.
Assuming an Erd\"os-R\'enyi model on the correlated graph pair, it is shown that successful matching is guaranteed when the number of seeds grows logarithmically with the number of vertices in the graphs. The logarithmic coefficient is shown to be inversely proportional to the mutual information between the edge variables in the two graphs. 
%
%
%
%
%Graph isomorphism refers to the problem of labeling the vertices of an unlabeled graph based on an identical labeled copy of the graph. An important generalization of the graph isomorphism problem is called graph matching. In graph matching, the identical copy of the graph is replaced by a graph whose edges are statistically correlated with - but not necessarily equal to - those in the original graph.
% It is often assumed that additional side-information is provided to aid the matching process in the form of seeds. Seeds are vertices in the graph pair whose matched labels are given prior to the start of the matching algorithm. We introduce a new information theoretic framework for the study of graph matching problems. We use this framework to investigate the fundamental limits of graph matching. Furthermore, we
%design algorithms which achieve these limits with polynomial and linear time complexities with respect to the
%number of vertices in the graph. It is shown that under the Erd\"os_R\'enyi graph models, successful matching is possible 
\end{abstract} 

% no keywords

% For peer review papers, you can put extra information on the cover
% page as needed:
% \ifCLASSOPTIONpeerreview
% \begin{center} \bfseries EDICS Category: 3-BBND \end{center}
% \fi
%
% For peerreview papers, this IEEEtran command inserts a page break and
% creates the second title. It will be ignored for other modes.
\IEEEpeerreviewmaketitle

\section{Introduction}
Network graphs emerge naturally in modeling a wide range of phenomena including
social interactions, database systems, biological systems, and epidemiology. In many applications
such as DNA sequencing, pattern recognition, and image processing, it is desirable to find algorithms
to match correlated network graphs. In other applications, such as social networks and database
systems, privacy considerations require the network operators to enforce security safeguards in order to preclude de-anonymization using graph matching.

%Following the convention in the de-anonymization literature on graph matching, we assume that an 'attacker' has access to 
In this work, we consider two formulations of the graph matching problem: i) \textbf{Graph Isomorphism:} In this formulation it is assumed that an agent is given two identical copies of the same graph. The vertices in the first graph $g^1$ are labeled. Whereas the second graph, $g^2$, is an unlabeled graph. The objective is to find the natural labeling of the vertices in $g^2$ based on the labels in $g^1$ \cite{iso1,iso2,iso3,iso4}.
\textbf{ii) Seeded Graph Matching:}  In this formulation, $g^1$ and $g^2$ are two statistically correlated graphs.  Here, it is assumed that the agent is provided with additional side-information in the form of seeds. More precisely, for a small subset of vertices, the correct labeling across both graphs $g^1$ and $g^2$ is given \cite{seed1,seed2,seed3,seed4,efe,kiavash,Grossglauser}. This is shown in Figure \ref{fig:corr_graph}. 
There are various practical scenarios captured in this formulation. One pertinent application is the de-anonymization of users who are members of multiple social networks. Currently, many web users are members of multiple online social networks such as Facebook, Twitter, Google+, LinkedIn, etc..  Each online network represents a subset of the users' ``real" ego-networks. Graph matching provides algorithms to de-anonymize the users by reconciling these online network graphs, that is, to identify all the accounts belonging to the same individual. In this context, the availability of seeds is justified by the fact that a small fraction of individuals explicitly link their accounts across multiple networks. In this case, these linked accounts can be used as seeds in the agent's matching algorithm. It turns out, that in many cases, the agent may leverage these connections to identify a very large fraction of the users in the network ~\cite{seed1,seed2,seed3,seed4,kiavash}.
\begin{figure}
\centering 
\includegraphics[width=3in,draft=false]{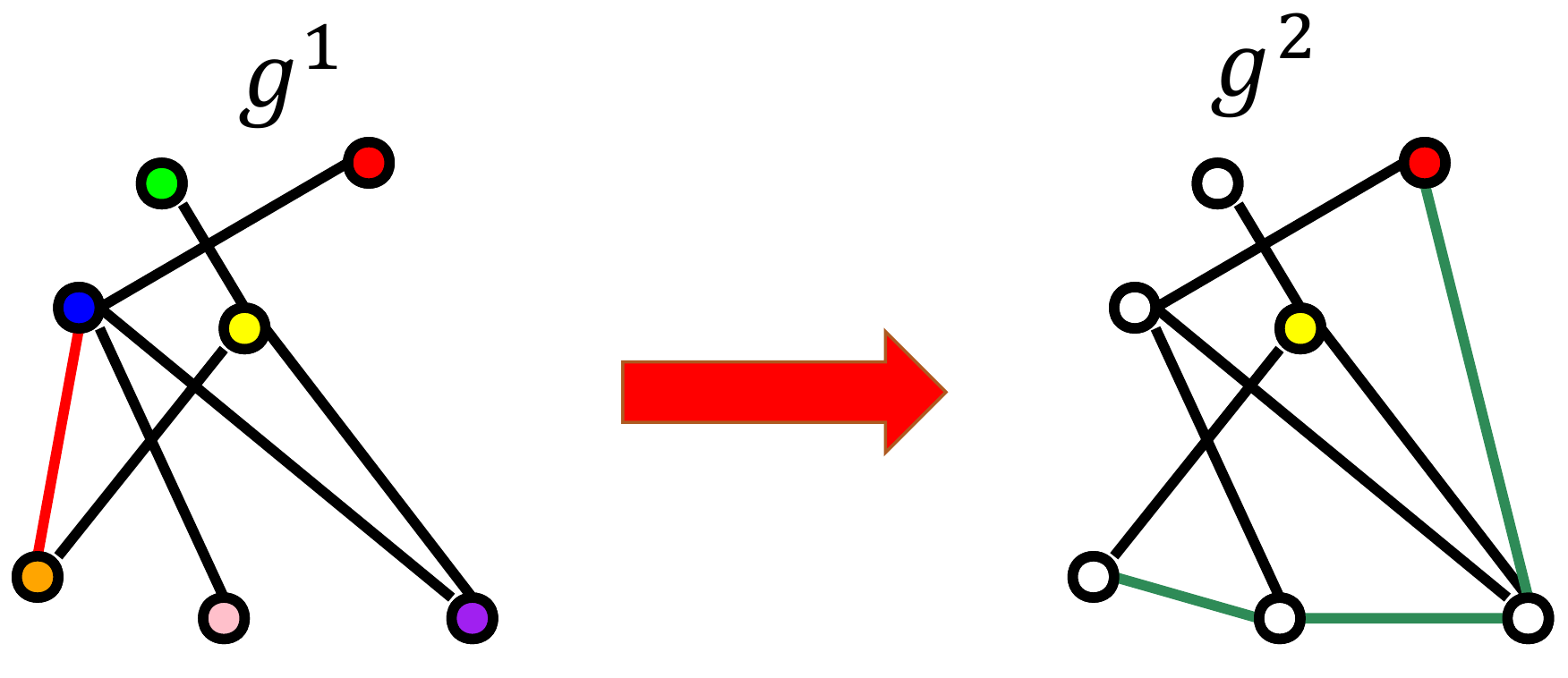}
\caption{The graph pair $(g^1,g^2)$ are correlated. The objective is to restore the labels in graph $g^2$ based on the labels in $g^1$ and the seed vertices.
}
\label{fig:corr_graph}
\end{figure}

 Graph matching is a classical problem which is of significant interest in multiple disciplines. As a result, there is a large body of work considering the problem under varying assumptions on the graph statistics. The graph isomorphism problem was studied in \cite{iso1,iso2,iso3,iso4}. Under the Erd\"os- R\'enyi graph model, tight necessary and sufficient conditions were derived \cite{ER,wright} and polynomial time algorithms were proposed \cite{iso1,iso2,iso4}. The problem of matching of correlated graphs was considered for pairs of Erd\"os- R\'enyi graphs \cite{corr1,corr2,corr3,corr4}, as well as graphs with community structure \cite{corr_com_1,kiavash,efe,Grossglauser}. Furthermore, conditions for seeded and seedless graph matching were derived for general graph scenarios \cite{seed1,seed2,seed3,seed4}. While great progress has been made in determining the conditions for successful matching and providing feasible matching algorithms, there are shortcomings in the current approaches. Many of the results provide solutions which assume unbounded computational resources (i.e. existence results).
Furthermore, performance characterizations are often provided in the form of complex algebraic characterizations which lead to limited insight for application in practical scenarios. In this work, we propose a new \textit{information theoretic} framework for the study of the fundamental limits of graph matching as well as for the design of feasible algorithms. We use the new framework to study the graph isomorphism problem as well as seeded graph matching. 

The contributions can be summarized as follows. In graph isomorphism, we use tools from information theory such as concentration of measure properties and type analysis to analyze  the sufficient conditions for successful matching for the maximum degree algorithm studied in \cite{iso1,iso2}. We  rederive  the conditions for successful matching given in \cite{iso2} under the Erd\"os-R\'enyi graph model.
 In seeded matching, we propose a new matching algorithm which is inspired by a fingerprinting strategy introduced recently in \cite{Allerton}. The new algorithm  makes use of information theoretic tools such as typicality as well as ideas from point-to-point communications. This strategy has polynomial-time complexity and guarantees reliable matching under the Erd\"os-R\'enyi graph model as long as the seed size $\Lambda$ satisfies  $$\Lambda_n> \Theta\left(\frac{\log{n}}{I(X_1;X_2)}\right),$$ 
where $I(X_1;X_2)$ represents the mutual information between the edge distributions in $g^{1}$ and $g^{2}$ and $n$ is the number of vertices.
In contrast with~\cite{Beyah,kiavash}, which also consider the seeded de-anonymization problem, our results have a clear information theoretic interpretation. That is, each seed provides roughly $I(X_1;X_2)$ bits of information about the label of any unlabeled vertex. As a result, almost $\frac{\log{n}}{I(X_1;X_2)}$ seeds are necessary for successful matching.

The rest of the paper is organized as follows: In Section II we introduce our notation and problem formulation, Section III considers the graph isomorphism problem, Section IV investigates seeded graph matching, and Section V concludes the paper.

% \textcolor{red}{Farhad: please check that these are actually the right cites}. Further, the result also implies that the strategy similar to the TSS strategy presented in Section~\ref{} works for seeded social network de-anonymization. \textcolor{red}{Farhad: is this poly time?}
\section{Problem Formulation}
In this section, we construct the foundations of our study of the graph matching problem by providing a rigorous formalization of the problem. We assume a stochastic model where the graphs are generated randomly based on an underlying distribution. Particularly, the two graphs are produced based on the Erd\"os-R\'enyi model. Furthermore, the edges in the two graphs are statistically correlated through a joint probability distribution.  More precisely, each edge in the second graph is correlated with its corresponding edge in the first graph and is independent of all other edges. The first graph is labeled whereas the second graph is unlabeled. The objective is to determine the labeling of the second graph which conforms with the joint distribution between the edges of the two graphs. We proceed by presenting our formalization of the graph matching problem. 

\begin{Definition}
 An unlabeled graph $g$ consists of a pair $(\mathcal{V}_n,\mathcal{E}_n)$. The set $\mathcal{V}_n=\{v_{n,1},v_{n,2},\cdots,v_{n,n}\}$ is called the set of vertices of the graph, and the set $\mathcal{E}_n\subset\{(v_{n,i},v_{n,j})|i\in [1,n], j\in[1,n]\}$ is called the set of edges of the graph. For a given vertex $v_{n,i}, i\in [1,n]$ the set of neighboring vertices is defined as $\mathcal{E}_{n,i}=\{v_{n,j}|(v_{n,i},v_{n,j})\in \mathcal{E}_n\}$. The degree of $v_{n,i}$ is defined as $d_{v_{n,i}}=|\mathcal{E}_{n,i}|$.
\end{Definition}

\begin{Definition}
  For an unlabeled graph $g=(\mathcal{V}_n,\mathcal{E}_n)$, a labeling is defined as a bijective function $\sigma: \mathcal{V}_n\to [1,n]$.  
\end{Definition}

\begin{Definition}
A labeled graph $\tilde{g}$ consists of a pair $(g, \sigma)$, where $g$ is an unlabeled graph and $\sigma$ is a labeling for $g$. 
\end{Definition}

\begin{Definition}
For a given $p\in [0,1]$, an Erd\"os-R\'enyi (ER) graph $g_{n,p}$ is a randomly generated unlabeled graph with vertex set $\mathcal{V}_n$ and edge set $\mathcal{E}_n$, such that
\begin{align*}
 Pr(g_{n,p})= p^{|\mathcal{E}_n|}(1-p)^{\left({n\choose 2}- |\mathcal{E}_n|\right)}.
\end{align*}

 %In other words, for an arbitrary edge $e=(i,j), i,j\in [1,n]$, we have $P(e\in \mathcal{E})=p$, and the edges generated are mutually independently. 
\end{Definition}

In this work, we consider families of correlated pairs of labeled Erd\"os-R\'enyi graphs $\underline{g}_{n,P_{n,X_1,X_2}}=(\tilde{g}^1_{n,p_{n,1}},\tilde{g}^2_{n,p_{n,2}})$. 

\begin{Definition}
For a given joint distribution $P_{X_1,X_2}$ on binary random variables $X_1$ and $X_2$, a correlated pair of labeled Erd\"os-Renyi (CER) graphs $\tilde{\underline{g}}_{n,P_{X_1,X_2}}=(\tilde{g}^1_{n,p_{1}},\tilde{g}^2_{n,p_{2}})$ is characterized by: i) the pair of Erd\"os-Renyi graphs $g^i_{n,p_{i}}, i\in \{1,2\}$, ii) the pair of labelings $\sigma^i$ for the unlabeled graphs $g^i_{n,p_{1}}, i\in \{1,2\}$, and iii) the probability distribution $P_{X_1,X_2}(x_1,x_2), (x_1,x_2) \in \{0,1\}^2$, such that: 
\\1)The pair $g^i_{n,p_{i}}, i\in \{1,2\}$ have the same set of vertices $\mathcal{V}_n=\mathcal{V}_n^1=\mathcal{V}_n^2$. 
\\2) The marginals satisfy $P_{X_i}(X_i=1)=p_{i}, i\in \{1,2\}$. 
\\3) For any two edges $e^i=(v^i_{n,1},v^i_{n,2}), i\in \{1,2\}$, we have
 \begin{align*}
 &Pr\left(\mathbbm{1}(e^1\in \mathcal{E}_n^1) =  \alpha, \mathbbm{1}(e^2\in \mathcal{E}_n^2)=\beta)\right)=
\begin{cases}
P_{X_1,X_2}(\alpha,\beta),& \text{if } \sigma^1_{j}(v^1_{n,j})=\sigma^2_j(v^2_{n,j}), j\in \{1,2\}\\
P_{X_1}(\alpha)P_{X_2}(\beta), & \text{Otherwise}
\end{cases}
,\forall \alpha,\beta\in \{0,1\}^2,
\end{align*}
where $\mathbbm{1}(\cdot)$ is the indicator function. 
% and for an arbitrary edge $e=(v_i,v_j), i,j\in [1,n]$, we have 
%\begin{align*}
% P\left(\mathbbm{1}(e\in \mathcal{E}^1)=\alpha, \mathbbm{1}(e\in \mathcal{E}^2=\beta)\right)=P_{X_1,X_2}(\alpha,\beta), \forall \alpha,\beta\in \{0,1\}^2,
%\end{align*}
\end{Definition}

\begin{Definition}
For a given joint distribution $P_{X_1,X_2}$, a correlated pair of partially labeled Erd\"os-Renyi (CPER) graphs $\underline{g}_{n,P_{X_1,X_2}}=(\tilde{g}^1_{n,p_{1}},{g}^2_{n,p_{2}})$  is characterized by: i) the pair of Erd\"os-Renyi graphs $g^i_{n,p_{i}}, i\in \{1,2\}$, ii) a labeling $\sigma^1$ for the unlabeled graph $g^1_{n,p_{1}}$, and iii) the probability distribution $P_{X_1,X_2}(x_1,x_2), (x_1,x_2) \in \{0,1\}^2$,  such that there exists a labeling $\sigma^2$ for the graph $g^2_{n,p_{2}}$ for which $(\tilde{g}^1_{n,p_{1}},\tilde{g}^2_{n,p_{2}})$ is a CER, where $\tilde{g}^2_{n,p_{2}}\triangleq (g^2_{n,p_{2}},\sigma^2)$.
\end{Definition}

%\begin{Example}
%Consider the Facebook social network. The friendship relations in the network can be captured using a graph where the vertices represent users in the network and an edge between user $v_i$ and user $v_j$ indicates a friendship relationship between the users. Furthermore, if we consider the pair of graphs corresponding to the Facebook and Twitter social networks, the two resulting graphs are a pair of correlated graphs. Users which are connected in one network are more likely to be connected in the other network as well. In graph matching, we assume that the labeled graph of one network is available while for the other social network, only the unlabeled graph is given. The objective is to find the correct labeling for the unlabeled social network.
%\end{Example}

The following defines a matching algorithm:

\begin{Definition}
 A matching algorithm for the family of CPERs $\underline{g}_{n,P_{n,X_1,X_2}}=(\tilde{g}^1_{n,p_{n,1}},{g}^2_{n,p_{n,2}}), n\in \mathbb{N}$ is a sequence of functions 
 $f_n:\underline{g}_{n,P_{n,X_1,X_2}}\mapsto \hat{\sigma}_n^2$ such that $P(\sigma_n^2=\hat{\sigma}_n^2)\to 1$ as $n\to \infty$, where $\sigma^2_n$ is the labeling  for the graph $g^2_{n,p_{n,2}}$ for which $(\tilde{g}^1_{n,p_{n,1}},\tilde{g}^2_{n,p_{n,2}})$ is a CER, where $\tilde{g}^2_{n,p_{n,2}}\triangleq (g^2_{n,p_{n,2}},\sigma_n^2)$.

\end{Definition}
\begin{Definition}
 A matching algorithm for the family of CPERs $\underline{g}_{n,P_{n,X_1,X_2}}=(\tilde{g}^1_{n,p_{n,1}},{g}^2_{n,p_{n,2}}), n\in\mathbb{N}$ is called a graph isomorphism algorithm if $X_1=X_2$ with probability one.
\end{Definition}
\begin{Definition}
 A seeded matching algorithm for the family of CPERs $\underline{g}_{n,P_{n,X_1,X_2}}=(\tilde{g}^1_{n,p_{n,1}},{g}^2_{n,p_{n,2}})$
 and seed sets $\mathcal{S}_n\subset \mathcal{V}^2_n$
  is a sequence of functions 
 $f_n:\underline{g}_{n,P_{n,X_1,X_2}} \times \sigma_n^2\big|_{\mathcal{S}_n}\mapsto \hat{\sigma}_n^2$ such that $P(\sigma_n^2=\hat{\sigma}_n^2)\to 1$ as $n\to \infty$, where $\sigma_n^2\big|_{\mathcal{S}_n}$ is the restriction of $\sigma_n^2$ to the subset $\mathcal{S}_n$ of the vertices. $\Lambda_n=|\mathcal{S}_n|$ is called the seed size.
\end{Definition}
The following defines an achievable region for the graph matching problem.
\begin{Definition}
 For the graph matching problem, a family of sets of distributions $\widetilde{P}=(\mathcal{P}_n)_{n\in \mathbb{N}}$ is said to be in the achievable region if for every sequence of distributions $P_{n,X_1,X_2}\in \mathcal{P}_n, n\in \mathbb{N}$, there exists a matching algorithm.
 \end{Definition} 
\begin{Definition}
For the seeded graph matching problem, a family of pairs of distribution sets and seed sizes $\widetilde{P}_s=(\mathcal{P}_n, \Lambda_n)_{n\in \mathbb{N}}$ is said to be in the achievable region if for every sequence of pairs $P_{n,X_1,X_2} \in \mathcal{P}_n$ and $\mathcal{S}_n\in \mathcal{V}^2_n$ such that $\mathcal{S}_n=|\Lambda_n|$, there exists a seeded matching algorithm. %The achievable region $\widetilde{\mathcal{P}}_s$ is defined as the set of all families of pairs of distribution regions and seed sizes which are achievable.
\end{Definition}

\section{Graph Isomorphism}
In this section, we consider the graph isomorphism problem. First, we analyze the maximum degree algorithm (MDA) proposed in \cite{iso1,iso2} and re-derive the achievable distribution region using information theoretic quantities.

\subsection{Maximum Degree Algorithm}
In their seminal paper, Babai et. al. \cite{iso1} proposed the first polynomial time algorithm for graph isomorphism for families of CPERs with parameter $p_n=\frac{1}{2}, n\in \mathbb{N}$. This result was later extended to include a larger achievable region, namely the region characterized by  $p_n\in [\omega({{\log{n}}/{n^{\frac{1}{5}}}}),1-\omega({\log{n}}/{n^{\frac{1}{5}}})]$,  \cite{iso2} . The algorithm operates in two steps. In the first step,  the agent who is matching the two graphs exploits the uniqueness of the degrees of the highest degree vertices in the graphs to match these vertices. In the next step, these vertices are used as seeds to match the rest of the vertices.
 More precisely, the agent orders the vertices in both graphs based on their degrees. Let $d^{j}_{n,(i)}, j\in \{1,2\}$ be the $i$th largest degree in $g^j_{n,p_{n,j}}$. Also, let $k^j_{n,(i)}$ be the number of vertices in $g^j_{n,p_{n,j}}$ with degree  $d^{j}_{n,(i)}$ and let $v^{j}_{n,(i),k}, k\in [1,k^j_{n,(i)}]$ be the $k$th vertex with degree $k^j_{n,(i)}$. Define 
\begin{align*}
 d_{n,u}= argmin \{i| k^1_{n,(j)}=1, \forall j\geq i\}.
\end{align*}

In other words, if we list the vertices in $g^1$ in decreasing degree order and  start from the vertex with the highest degree, the first $d_{n,u}$ vertices have unique degrees. The following lemma provides a lower bound on $d_{n,u}$.

\begin{Lemma}[ \cite{iso2} Theorem 3.15]
Suppose $m=o\left(\left(\frac{p(1-p)n}{\log{n}}\right)^{\frac{1}{4}}\right)$. Then,
 \begin{align*}
P(d_{n,u}>m)\to 1 \qquad \text{as} \qquad n\to \infty.
\end{align*}
\label{lem:degree}
\end{Lemma}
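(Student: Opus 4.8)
As explained in the text following its definition, $d_{n,u}$ equals the number of vertices of largest degree in $g^1_{n,p}$ whose degrees are pairwise distinct (the length of the strictly decreasing prefix of the degree sequence sorted in non‑increasing order). Hence $\{d_{n,u}>m\}$ is exactly the event that the $m+1$ highest degrees are distinct, and it suffices to show this event has probability tending to $1$. To set things up, let $D\sim\mathrm{Bin}(n-1,p)$ be the degree of a fixed vertex, put $\mu=(n-1)p$ and $\sigma^{2}=(n-1)p(1-p)$, choose a sequence $M_n$ with $2(m+1)\le M_n\to\infty$ and $M_n=o\big((p(1-p)n/\log n)^{1/4}\big)$ — possible by the hypothesis on $m$ — and let $\tau_n$ be the least integer with $n\,P(D\ge\tau_n)\le M_n$, so that $\tau_n=\mu+\Theta(\sigma\sqrt{\log n})$ and $N_n:=|\{v\in\mathcal V_n:\ d_{v}\ge\tau_n\}|$ has mean $\Theta(M_n)$. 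I would then prove two things: \emph{(i)} $N_n\ge m+1$ with probability $\to1$, so that the $m+1$ largest degrees exceed $\tau_n$; and \emph{(ii)} with probability $\to1$ no two vertices whatsoever have equal degree $\ge\tau_n$. Together these force the top $m+1$ degrees to be distinct.

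For step (i) I would use a first/second‑moment (Chebyshev) bound on $N_n$. The degrees are not independent, but for $u\ne v$ one may write $d_u=\xi_{uv}+A_u$ and $d_v=\xi_{uv}+A_v$, where $\xi_{uv}=\mathbbm{1}((v_{n,u},v_{n,v})\in\mathcal E_n)\sim\mathrm{Bern}(p)$ and $A_u,A_v\sim\mathrm{Bin}(n-2,p)$ are mutually independent. This gives $\mathrm{Cov}\big(\mathbbm{1}(d_u\ge\tau_n),\mathbbm{1}(d_v\ge\tau_n)\big)=p(1-p)\,P(A_u=\tau_n-1)^{2}$, and summing over pairs shows the total covariance, hence $\mathrm{Var}(N_n)$, is $o\big(\mathbb E[N_n]^{2}\big)$; Chebyshev then yields $P(N_n<M_n/2)\to0$.

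For step (ii) I would union‑bound over the $\binom{n}{2}$ pairs and use the same decomposition:
\[
P\big(\exists\,u\ne v:\ d_u=d_v\ge\tau_n\big)\ \le\ \binom{n}{2}\sum_{k\ge\tau_n-1}P(A_u=k)^{2}\ \le\ \binom{n}{2}\,P(A_u=\tau_n-1)\,P(A_u\ge\tau_n-1),
\]
the last inequality holding because $P(A_u=k)$ is non‑increasing in $k$ beyond the mode, which lies below $\tau_n$. The quantitative heart is the ratio $P(A_u=k+1)/P(A_u=k)=\frac{(n-2-k)p}{(k+1)(1-p)}$, which for $k$ near $\tau_n$ equals $1-\Theta(\sqrt{\log n}/\sigma)$: thus the binomial weights decay geometrically there, $P(A_u\ge\tau_n)$ is within a constant factor of $(\sigma/\sqrt{\log n})\,P(A_u=\tau_n)$, and hence $P(A_u=\tau_n-1)=O\big(\sqrt{\log n}/\sigma\big)\,P(A_u\ge\tau_n-1)$. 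Since $P(A_u\ge\tau_n-1)=\Theta(M_n/n)$, the displayed bound becomes $O\big(M_n^{2}\sqrt{\log n}/\sigma\big)=O\big(M_n^{2}\sqrt{\log n/(np(1-p))}\big)$, which tends to $0$ by the choice of $M_n$. Combining (i) and (ii), with probability $\to1$ the $m+1$ highest‑degree vertices all have degree $\ge\tau_n$ and no two vertices of degree $\ge\tau_n$ share a degree, so these $m+1$ degrees are distinct, i.e.\ $d_{n,u}>m$.

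The main obstacle I anticipate is precisely the tail estimate in step (ii). The crude bound $\sum_{k\ge\tau_n}P(A_u=k)^{2}\le\big(\max_{k}P(A_u=k)\big)P(A_u\ge\tau_n)=O(1/\sigma)\,P(A_u\ge\tau_n)$ is far too weak — it makes the union bound grow like $n/\sigma$ — so one genuinely has to exploit that, beyond $\tau_n$, the binomial weights decay geometrically with ratio $1-\Theta(\sqrt{\log n}/\sigma)$; this is exactly why the hypothesis carries a fourth root and a $\log n$ factor. The remaining points — quantifying the (harmless) dependence of the degree sequence in step (i), and the small boundary corrections (shifts by one, $\mathrm{Bin}(n-1,p)$ versus $\mathrm{Bin}(n-2,p)$, and matching $\tau_n$ to the stated definition of $d_{n,u}$) — are routine.
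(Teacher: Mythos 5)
The paper does not prove this lemma at all: it is imported verbatim from Bollob\'as (\emph{Random Graphs}, Theorem 3.15) and used as a black box, so there is no internal proof to compare against. Your sketch reconstructs what is essentially Bollob\'as's own argument, and it is sound. The reduction of $\{d_{n,u}>m\}$ to ``the top $m+1$ degrees are distinct'' matches the paper's verbal description of $d_{n,u}$ (the formal $\forall j\ge i$ in the paper's definition is evidently a typo), the decomposition $d_u=\xi_{uv}+A_u$, $d_v=\xi_{uv}+A_v$ with independent $A_u,A_v\sim\mathrm{Bin}(n-2,p)$ is the right way to handle the degree dependence, the covariance identity $\mathrm{Cov}=p(1-p)P(A_u=\tau_n-1)^2$ checks out, and your accounting in step (ii) --- $\binom{n}{2}\cdot O(\sqrt{\log n}/\sigma)\cdot\Theta(M_n/n)^2=O\bigl(M_n^2\sqrt{\log n/(npq)}\bigr)$ --- correctly isolates why the hypothesis carries a fourth root and a $\log n$.

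The one point you should make explicit rather than leave implicit is the regime of validity of the local estimates. Both the location $\tau_n=\mu+\Theta(\sigma\sqrt{\log n})$ and the claim that $P(A_u=\tau_n-1)=O(\sqrt{\log n}/\sigma)\,P(A_u\ge\tau_n-1)$ require two-sided binomial point-mass bounds at deviations of order $\sigma\sqrt{\log n}$ (the upper bound on $\tau_n-\mu$ follows from Bernstein, but the lower bound on the tail needs a Stirling/local-limit computation whose error term $e^{-O(t^3/\sigma)}$ is only controlled when $npq$ exceeds a power of $\log n$). Bollob\'as's Theorem 3.15 carries exactly such a side condition on $p$, which the paper's restatement of the lemma silently drops; in the regime where the lemma is actually invoked, $p=\omega(n^{-1/5}\log n)$, the condition holds with a large margin, so this is a caveat to record, not a flaw in your approach. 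With that proviso, and the routine bookkeeping you already flag (the $\pm1$ shifts, $\mathrm{Bin}(n-1,p)$ versus $\mathrm{Bin}(n-2,p)$, and choosing $M_n$ with a slightly generous constant so that Chebyshev yields $N_n\ge m+1$ rather than $N_n\ge(1-o(1))(m+1)$), your outline fills a gap the paper leaves open.
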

The agent matches the first $d_{n,u}$ vertices in the two graphs. Formally, the agent sets $\hat{\sigma}^2(v^2_{n,(i),1})=\sigma^1(v^1_{n,(i),1}), i\in [1,d_{n,u}]$. Next, it constructs a signature vector for each of the unmatched vertices. For each vertex $v^j_{n,(i),k}, j\in \{1,2\}, i<d_{n,u}, k\in [1,k_i]$, its signature is defined as the binary vector  $\underline{F}^j_{n,(i),k}=(F^j_{n,(i),k}(1), F^j_{n,(i),k}(2),\cdots,F^j_{n,(i),k}(d_{n,u}))$.
which indicates its connections to the matched vertices:
\begin{align*}
{F}^j_{n,(i),k}(l)=
\begin{cases}
 1 \qquad & \text{if} \qquad (v^j_{n,(i),k}, v^j_{n,(l),1})\in \mathcal{E}^j_{n}\\
 0& \text{Otherwise}
\end{cases},
\qquad l\in [1,d_{n,u}].
\end{align*}

For any arbitrary vertex $v^1_{n,(i),k}$, if there exists a unique vertex $v^2_{n,(i),k'}$ such that $\underline{F}^1_{n,(i),k}$=$\underline{F}^2_{n,(i),k'}$, then the agent matches the two vertices by setting $\hat{\sigma}^2(v^2_{n,(i),k})=\sigma^1(v^1_{n,(i),k'})$. If all of the vertices are matched at this step, then the algorithm succeeds, otherwise the algorithm fails. The following theorem provides the achievable distribution region for this algorithm.

\begin{Theorem}
Let $p_n\in [\omega({{\log{n}}/{n^{\frac{1}{5}}}}),1-\omega({\log{n}}/{n^{\frac{1}{5}}})]$. Then, the MDA is a graph isomorphism algorithm for the CPER $\underline{g}_{n,P_{n,X_1,X_1}}=(\tilde{g}^1_{n,p_n},{g}^2_{n,p_n})$.
\label{th:MDA}
\end{Theorem}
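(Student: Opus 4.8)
The plan is to control $P(\hat\sigma_n^2\neq\sigma_n^2)$ phase by phase. Fix once and for all $m_n=\lfloor n^{1/5}\rfloor$; the two quantitative facts used below are $m_n=o\big((p_n(1-p_n)n/\log n)^{1/4}\big)$ and $p_n(1-p_n)\,m_n=\omega(\log n)$, and one checks that each of these is equivalent to the hypothesis $p_n(1-p_n)=\omega(\log n/n^{1/5})$, so the stated range of $p_n$ is exactly what this choice of $m_n$ requires.

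Since $X_1=X_2$ almost surely, the CER defining condition forces, for every pair of labels $a,b$, the edge $\{a,b\}$ to lie in $\tilde g^1$ iff it lies in $\tilde g^2$; hence under the correct labeling $\sigma_n^2$ the two graphs are isomorphic and corresponding vertices have equal degree. Consequently the sorted degree sequences of $g^1_{n,p_n}$ and $g^2_{n,p_n}$ agree, their top-ranked unique-degree vertices agree, and the first-phase assignment $\hat\sigma^2(v^2_{n,(i),1})=\sigma^1(v^1_{n,(i),1})$, $i\le d_{n,u}$, is correct deterministically. The only failure mode of the first phase is that fewer than $m_n$ anchors are produced, i.e. $d_{n,u}<m_n$; by Lemma~\ref{lem:degree} with $m=m_n$ this has probability $o(1)$.

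For the second phase it suffices, on the event $d_{n,u}\ge m_n$, to prove that with probability $1-o(1)$ all vertices of $g^1_{n,p_n}$ lying outside the set $T$ of the $m_n$ highest-degree vertices have pairwise distinct length-$m_n$ signatures relative to $T$: the true image of an unmatched vertex always realizes its signature, and distinctness relative to $T$ gives distinctness relative to the (at least as large) realized anchor set, so each unmatched $v^1$ then has a unique — hence correct — partner in $g^2$. To estimate $P\big(\exists\,v\neq w\notin T:\underline{F}^{T}_{v}=\underline{F}^{T}_{w}\big)$, restrict first to the regularity event on which $\max_{v}|d_v-p_n(n-1)|=O\big(\sqrt{p_n(1-p_n)n\log n}\big)$, which has probability $1-o(1)$ and forces every $u\in T$ to have degree $p_n n(1+o(1))$; then, working conditionally on $T$ and on the degree sequence, the edge indicators between $T$ and $V\setminus T$ are $\mathrm{Bernoulli}(p_n)$ up to a $(1\pm o(1))$ multiplicative bias and are negatively associated across the $m_n$ coordinates. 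A type count over $\{0,1\}^{m_n}$ — grouping signature vectors by their weight and using $\sum_{k}\binom{m_n}{k}\big(p_n^{k}(1-p_n)^{m_n-k}\big)^2=(p_n^2+(1-p_n)^2)^{m_n}$ — then gives, for each fixed pair $v\ne w$,
\[ P\big(\underline{F}^{T}_{v}=\underline{F}^{T}_{w}\big)\ \le\ \big(1-2p_n(1-p_n)\big)^{m_n(1+o(1))}, \]
and a union bound over the $\binom n2$ pairs bounds the failure probability by $n^2\exp\!\big(-2p_n(1-p_n)m_n(1+o(1))\big)$, which tends to $0$ because $p_n(1-p_n)m_n=\omega(\log n)$. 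Removing the conditioning costs only the $o(1)$ probability of the regularity event, and adding the two phase bounds yields $P(\hat\sigma_n^2=\sigma_n^2)\to1$.

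The routine pieces are the binomial-tail / degree-concentration estimates, the elementary type identity above, and the arithmetic matching $m_n=\lfloor n^{1/5}\rfloor$ to the range of $p_n$. The genuine obstacle is the displayed per-pair bound: the anchor set $T$ is a function of the graph — it is selected precisely for carrying anomalously large degrees — so the signature bits are not literally i.i.d.\ $\mathrm{Bernoulli}(p_n)$, and one must show that conditioning on ``$T$ is the top-$m_n$ set'' perturbs the relevant collision probability only by a $(1+o(1))$ factor in the exponent. This is where degree concentration (the bias is of relative order $\sqrt{\log n/(p_n n)}=o(1)$) together with a negative-association (or configuration-model) argument enter, after which the slack in $p_n(1-p_n)m_n=\omega(\log n)$ absorbs the perturbation; crucially, because this estimate is carried out conditionally on the realized $T$, the union bound runs over pairs of vertices only and incurs no $\binom{n}{m_n}$ overcounting of anchor sets.
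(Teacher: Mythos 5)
Your proposal follows essentially the same route as the paper's proof: anchor the top-degree vertices via Lemma~\ref{lem:degree}, bound the per-pair signature collision probability by the type identity $\sum_k\binom{m}{k}p^{2k}(1-p)^{2(m-k)}=(p^2+(1-p)^2)^m=(1-2p(1-p))^m$, union bound over $n^2$ pairs, and match the resulting exponent condition to the range $p_n=\omega(n^{-1/5}\log n)$ (the paper phrases the exponent via the R\'enyi entropy $H_2(p)$ and then lower-bounds it by $2p(1-p)\log_2 e$, which is the same arithmetic you do directly). The one substantive difference is that you explicitly confront the selection-bias issue --- that signature bits relative to the top-degree set are not exactly i.i.d.\ $\mathrm{Bernoulli}(p_n)$ --- and sketch a concentration-plus-negative-association fix, whereas the paper disposes of this in a footnote deferring to Bollob\'as; both treatments leave that step at the level of a sketch.
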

\begin{proof}
This result was shown by Bollob\'as \cite{iso2}. We reprove the theorem using a new method. The tools introduced in the proof are used in the next section when we study seeded graph matching.
Throughout the proof we write $p$ instead of $p_n$ for brevity. As explained above, the matching algorithm succeeds if every vertex has a unique signature. Define the following error event:
\begin{align*}
 \mathcal{D}_n=\{\underline{g}_{n,P_{n,X_1,X_1}}| \exists k\neq k', \underline{F}^j_{n,(i),k}=\underline{F}^j_{n,(i),k'}\}.
\end{align*}
The algorithm succeeds as long as $P(\mathcal{D}_n)\to 0$ as $n\to \infty$. We have:
\begin{align}
 P(\mathcal{D}_n)&= P\left(\bigcup_{\substack{k\neq k' \\ i\in [1,n]}} \{\underline{g}_{n,P_{n,X_1,X_1}}| \underline{F}^j_{n,(i),k}=\underline{F}^j_{n,(i),k'}\}\right)
 \stackrel{(a)}{\leq}\sum_{i=1}^n\sum_{k=1}^{k_{n,i}}\sum_{k'<k}P\left(\underline{F}^j_{n,(i),k}=\underline{F}^j_{n,(i),k'}\right)
 \stackrel{(b)}{\leq} n^2P\left(\underline{F}^j_{n,(i),k}=\underline{F}^j_{n,(i),k'}\right),\label{eq:bound10}
 \end{align}
where in (a) we have used the union bound and (b) follows from the fact that there are a total of $n$ vertices. Next, we calculate $P\left(\underline{F}^j_{n,(i),k}=\underline{F}^j_{n,(i),k'}\right)$:
\begin{align*}
 P&\left(\underline{F}^j_{n,(i),k}=\underline{F}^j_{n,(i),k'}\right)=\sum_{\underline{F}'\subset \{0,1\}^{d_{n,u}}} 
 P(\underline{F}'=\underline{F}^j_{n,(i),k}=\underline{F}^j_{n,(i),k'})
 \stackrel{(a)}{=} \sum_{\underline{F}'\subset \{0,1\}^{d_{n,u}}}  P(\underline{F}'=\underline{F}^j_{n,(i),k})P(\underline{F}'=\underline{F}^j_{n,(i),k'})\nonumber
 \\
 &= \sum_{\underline{F}'\subset \{0,1\}^{d_{n,u}}} p^{2w_H(\underline{F}')}(1-p)^{2(n-w_H(\underline{F}'))}
= \sum_{i=1}^{d_{n,u}} {d_{n,u} \choose i} p^{2i}(1-p)^{2(d_{n,u}-i)}\nonumber\\
 &\stackrel{(b)}{=} \sum_{i=1}^{d_{n,u}} \frac{1}{\sqrt{\lambda d_{n,u}}}2^{nH_b(\frac{i}{d_{n,u}})}\left(1+O(\frac{1}{d_{n,u}})\right) p^{2i}(1-p)^{2(d_{n,u}-i)}\nonumber,
 \end{align*}
 
 where (a) follows from the fact that the difference between subsequent unique degrees is at least 2 with probability one \footnote{For a complete argument on the independence of the edge incidence refer to the proof of Theorem 3.17 in \cite{iso2}.}, and in (b) we have used the following equality. 
\begin{align*}
 {d_{n,u} \choose i}=\frac{1}{\sqrt{\lambda d_{n,u}}}2^{nH_b(\frac{i}{d_{n,u}})}\left(1+O(\frac{1}{d_{n,u}})\right),
\end{align*}
where $\lambda=2\pi p(1-p)$ and $H_b$ is the binary entropy function. The proof of the equality follows from Sterling's approximation $d_{n,u}!=\sqrt{2\pi n} e^{-d_{n,u}}{d_{n,u}}^{d_{n,u}}\left(1+O(\frac{1}{d_{n,u}})\right)$. Following the steps in the proof of Theorem 2 in \cite{Allerton}, we have: 
\begin{align}
\label{eq:bound11}
  P\left(\underline{F}^j_{n,(i),k}=\underline{F}^j_{n,(i),k'}\right)\leq (p^2+(1-p)^2)^{d_{n,u}}.
\end{align}
Combining Equations \eqref{eq:bound10} and \eqref{eq:bound11} gives:
\begin{align}
 P(\mathcal{D}_n)\leq n^2(p^2+(1-p)^2)^{d_{n,u}},
 \end{align}
which goes to zero as $n\to \infty$ if $d_{n,u}=\omega\left(\frac{\log{n}}{H_2(p)}\right)$, where $H_2(p)$ is the R\'enyi entropy with parameter 2 given by:
\begin{align*}
 H_2(p)=\log{\frac{1}{p^2+(1-p)^2}}.
\end{align*}
So, from Lemma \ref{lem:degree}, the algorithm succeeds with probability one as long as:
\begin{align*}
 \frac{\log{n}}{H_2(p)}=o\left(\left(\frac{p(1-p)n}{\log{n}}\right)^{\frac{1}{4}}\right).
\end{align*}
Note that:
 \begin{align*}
 \frac{H_2(p)}{\log_2(e)}&=-\log_e(p^2+(1-p)^2)=\log_e(\frac{1}{p^2+(1-p)^2})
 \stackrel{(a)}{\geq} 1-\frac{1}{\frac{1}{p^2+(1-p)^2}}=1-p^2-(1-p)^2
=2p(1-p)\stackrel{(b)}{\geq} p,
\end{align*}
 where in (a) we have used $\log_e{x}\geq 1-\frac{1}{x}$, and in (b) we have assumed $p<\frac{1}{2}$ for sufficiently large $n$.  
 
Also, $p<\frac{H_2(p)}{\log_2{e}}$ gives  $\frac{\log{n}}{p}>\frac{\log_2{e}\log{n}}{H_2(p)}$. So, if we show that $\frac{\log{n}}{p}=o\left(\left(\frac{p(1-p)n}{\log{n}}\right)^{\frac{1}{4}}\right)$ it follows that  $\frac{\log{n}}{H_{2}(p)}=o\left(\left(\frac{p(1-p)n}{\log{n}}\right)^{\frac{1}{4}}\right)$. We have:

\begin{align*}
&\frac{\log{n}}{p}=o\left(\left(\frac{p(1-p)n}{\log{n}}\right)^{\frac{1}{4}}\right)
\Leftrightarrow n^{-\frac{1}{4}}\log{n}^{\frac{5}{4}}=o(p^{\frac{5}{4}}(1-p)^{\frac{1}{4}})\\
&\Leftrightarrow n^{-\frac{1}{4}}\log{n}^{\frac{5}{4}}=o(p^{\frac{5}{4}}) \Leftrightarrow n^{-\frac{1}{5}}\log{n}=o(p)
\Leftrightarrow p=\omega(n^{-\frac{1}{5}}\log{n}).
\end{align*}
\end{proof}

%uniqueness of the degrees of the most popular users in the network to match their corresponding vertices in the graph pair .  Let $\mathcal{W}=\{w_1,w_2,\cdots,w_{m'}$ be the set of users with unique degrees. The server matches these users in the two graphs in the first step. Let $\mathcal{U}=\{u_1,u_2,\cdots,u_{m-\Lambda-1}\}$ be the set of users whose labels are not matched in this step. In the next step, the users in $\mathcal{W}$ are used as seeds in order to match the rest of the users in the two graphs. It was shown that for Erd\"os-R\'enyi graphs with edge probability $\frac{1}{2}$, the algorithm succeeds almost surely as the number of users grows asymptotically large. In \cite{iso2}, the results were further extended to graphs with edge probability $p\in [\omega(m^{\frac{-1}{5}}\log{m}),1-\omega(m^{\frac{-1}{5}}\log{m})]$.  

\section{Seeded Graph Matching}
\begin{figure}
\centering 
\includegraphics[width=4in, draft=false]{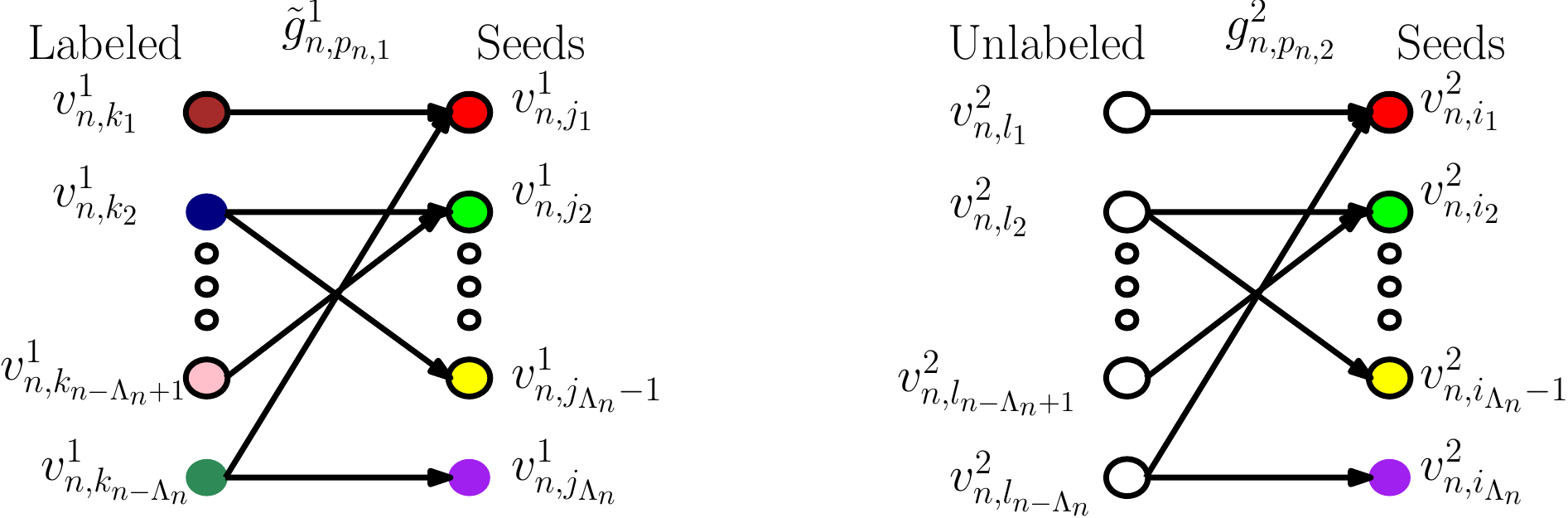}
\caption{The agent constructs the bipartite graph which captures the connections between the unmatched vertices with the seed vertices.}
\label{fig:bipart}
\end{figure}
In this section, we consider the seeded graph matching problem. We argue that seeded graph matching is related to the active fingerprinting problem studied in \cite{Allerton}. In active fingerprinting, the agent has access to two correlated bipartite graphs. The vertices in each graph are partitioned into a set of anonymized users and a set of groups. The objective is to match a single anonymized user in the first bipartite graph to a corresponding user in the second graph based on their connections with group vertices. 

 To demonstrate the connection between seeded graph matching and active fingerprinting, we view seeds in graph matching as groups in active fingerprinting and unmatched vertices as anonymized users. Then. the seeded matching problem can be solved using a variation of the active fingerprinting attacks in \cite{Allerton}. 
The proposed graph matching algorithm operates as follows. First, the agent constructs the bipartite graph shown in Figure~\ref{fig:bipart} whose edges consist of the connections between the unmatched vertices with the seeded vertices in each graph. Next, it matches the two bipartite graphs by iteratively applying the fingerprinting algorithm proposed in \cite{Allerton}. The fingerprinting algorithm operates in two steps. First, the agent constructs signature vectors for each of the unmatched vertices in the two bipartite graphs based on their connections to the seed vertices. In the second step, it uses joint typicality between the signature vectors to match the two vertices. The iterative application of this algorithm means that the agent first uses the fingerprinting algorithm to match a subset of the vertices. Then, the matched vertices in this step are added to the seed set and this expanded set of seeds is used in the next iteration to match the rest of the vertices.

Formally, assume that the agent is to match the CPER $\underline{g}_{n,P_{n,X_1,X_2}}=(\tilde{g}^1_{n,p_{n,1}},{g}^2_{n,p_{n,2}})$
 using the seed set $\mathcal{S}_n\subset \mathcal{V}^2_n$, and let $\epsilon_n= \omega(\frac{1}{\sqrt{\Lambda_n}})$.  
  Let $\mathcal{S}_n=\{v^2_{n,i_1},v^2_{n,i_2},\cdots, v^{2}_{n,i_{\Lambda_n}}\}$ and define the reverse seed set  $\mathcal{S}^{-1}_n=\{v^1_{n,j_1},v^1_{n,j_2},\cdots, v^{1}_{n,j_{\Lambda_n}}\}$, where $\sigma^1_n(v^1_{n,j_k})=\sigma^2_n(v^2_{n,i_k}), k\in [1,\Lambda_n]$.
 The agent is given $\sigma^1_n:\mathcal{V}^1_n\to [1,n]$ as well as $\sigma^2_n|_{\mathcal{S}_n}: \mathcal{S}_n\to [1,n]$. The objective is to find $\hat{\sigma}^2_n: \mathcal{V}^2_n\to [1,n]$ such that $P(\hat{\sigma}^2_n=\sigma^2_n)\to 1$ as $n\to \infty$. To this end, the agent first constructs a signature for each vertex in each of the graphs. For an arbitrary vertex $v^1_{n,i}$ in $g^1_{n,p_{n,1}}$, its signature is defined as the binary vector  $\underline{F}^1_{n,j}=(F^1_{n,j}(1), F^1_{n,j}(2),\cdots,F^1_{n,j}(\Lambda_n))$.
which indicates its connections to the reverse seed elements:
\begin{align*}
{F}^1_{n,j}(l)=
\begin{cases}
 1 \qquad & \text{if} \qquad (v^1_{n,j}, v^1_{n,j_l})\in \mathcal{E}^1_{n}\\
 0& \text{Otherwise}
\end{cases},
\qquad l\in [1,\Lambda_n].
\end{align*}
The signature of an arbitrary vertex $v^2_{n,i}$ is defined in a similar fashion based on connections to the elements of the seed set $\mathcal{S}_n$. Take an unmatched vertex $v^2_{n,i}\notin \mathcal{S}_n$. The agent matches $v^2_{n,i}$ to a vertex $v^1_{n,j}$ if it is the unique vertex such that the signature pair $(\underline{F}^1_{n,j}, \underline{F}^2_{n,i})$ are jointly $\epsilon$-typical with respect to the distribution $P_{n,X_1,X_2}$. Alternatively, 
\begin{align*}
 \exists ! i: (\underline{F}^1_{n,j}, \underline{F}^2_{n,i})\in A_{\epsilon}^n(X_1,X_2) \Rightarrow \hat{\sigma}^2_n(v^2_{n,i})=\sigma^1(v^1_{n,j}),
\end{align*}
where $A_{\epsilon}^n(X_1,X_2)$ is the set of jointly $\epsilon$-typical set sequences of length $n$ with respect to $P_{n,X_1,X_2}$. If a unique match is not found, then vertex $v^2_{n,i}$ is added to the ambiguity set $\mathcal{L}^2_n$. Assume that at the end of this step, the set of all matched vertices is  $\mathcal{V}^2_n\backslash \mathcal{L}^2_n$. In the next step, these vertices are added to the seed set and the expanded seed set is used to match the vertices in the ambiguity set.
The algorithm succeeds if all vertices are matched at this step and fails otherwise. We call this strategy the Typicality Matching Strategy (TMS).% The following characterizes the achievable region for the TMS.

\begin{Theorem}
Define the family of sets of pairs of distribution and seed sizes $\widetilde{\mathcal{P}}$ as follows:
\begin{align*}
 \widetilde{\mathcal{P}}= \{(\mathcal{P}_n,\Lambda_n)_{n\in \mathbb{N}}| \forall P_{n,X_1,X_2}\in \mathcal{P}_n: \frac{2\log{n}}{I(X_1,X_2)}\leq \Lambda_n \}.
\end{align*}
Then, $\widetilde{\mathcal{P}}$ is achievable using the TMS.
\label{th:seeded}
\end{Theorem}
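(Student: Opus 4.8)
The plan is to reduce the analysis to a single round of the fingerprinting-style matching and then bootstrap through the enlarged seed set. Write $I = I(X_1;X_2)$. The first observation is that the coordinates of each signature vector are mutually independent: distinct seed vertices (resp.\ reverse-seed vertices) determine distinct edges of $g^2_{n,p_{n,2}}$ (resp.\ $g^1_{n,p_{n,1}}$), and by the CER definition distinct edges are independent. Consequently, for the \emph{correct} pairing the string $(\underline{F}^1_{n,j},\underline{F}^2_{n,i})$ is i.i.d.\ of length $\Lambda_n$ with single-letter law $P_{n,X_1,X_2}$, while for any \emph{incorrect} pairing the two signatures are independent with single-letter marginals $P_{n,X_1}$ and $P_{n,X_2}$. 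I will treat the round-one error event as the union over unmatched $v^2_{n,i}$ of: (E1) the correct pair fails to be jointly $\epsilon_n$-typical; and (E2) some incorrect pair is jointly $\epsilon_n$-typical (causing a non-unique, possibly incorrect, match).

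For (E1), since $\epsilon_n = \omega(1/\sqrt{\Lambda_n})$, a Hoeffding bound on the $\Lambda_n$ i.i.d.\ letters shows that the empirical joint type of the correct pair lies within $\epsilon_n$ of $P_{n,X_1,X_2}$ except with probability decaying in $\epsilon_n^2 \Lambda_n \to \infty$, which keeps the expected number of (E1)-failures $o(n)$. For (E2), a method-of-types computation -- the exact analogue of the collision bound in the proof of Theorem~\ref{th:MDA} (which there produced the R\'enyi-$2$ entropy $H_2(p)$) and of Theorem~2 of \cite{Allerton} -- gives that a fixed incorrect pair is jointly $\epsilon_n$-typical with probability at most $2^{-\Lambda_n(I - \delta(\epsilon_n))}$, where $\delta(\epsilon_n) \to 0$. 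A union bound over the at most $n(n-1)$ pairs (unmatched vertex, incorrect candidate) then bounds the expected number of (E2)-events by $n^2\, 2^{-\Lambda_n(I - \delta(\epsilon_n))}$; since $\Lambda_n \geq \tfrac{2\log n}{I}$, the leading $n^2 2^{-\Lambda_n I} = n^2 \cdot n^{-2}$ cancels, and this is $o(n)$ provided $\epsilon_n$ decays fast enough. Markov's inequality then yields that, with probability tending to $1$, round one leaves an ambiguity set $\mathcal{L}^2_n$ of size $o(n)$ (in fact $n^{o(1)}$) and introduces at most $n^{o(1)}$ mismatches.

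It remains to bootstrap. After round one, w.h.p.\ the matched set $\mathcal{V}^2_n\setminus\mathcal{L}^2_n$ has size $n - o(n) = \Theta(n)$ and all but $n^{o(1)}$ of its labels are correct. Running the same (E1)--(E2) analysis with the signature length $\Lambda_n$ replaced by $\Theta(n)$, and with a fresh slack parameter $\epsilon'_n = \omega(1/\sqrt{n})$ chosen to dominate the fraction $n^{o(1)}/\Theta(n)$ of corrupted seed coordinates, the incorrect-pair collision probability becomes $2^{-\Theta(n)}$; a union bound over the at most $n^2$ remaining (vertex, candidate) pairs shows that \emph{every} vertex of $\mathcal{L}^2_n$ is matched, uniquely and correctly, with probability $1-o(1)$. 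Hence $P(\hat\sigma^2_n = \sigma^2_n) \to 1$ for every $P_{n,X_1,X_2} \in \mathcal{P}_n$ with $\Lambda_n \geq \tfrac{2\log n}{I}$, i.e.\ $\widetilde{\mathcal{P}}$ is achievable by the TMS.

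The main obstacle is the round-one bookkeeping at the critical seed size $\Lambda_n = \tfrac{2\log n}{I}$: one must choose $\epsilon_n$ small enough that the collision slack $\delta(\epsilon_n)$ (which is $\Theta(\epsilon_n)$) does not overwhelm the $n^2 \cdot n^{-2}$ gain from the union bound, yet large enough ($\omega(1/\sqrt{\Lambda_n})$) that the (E1)-failure probability is small -- and, crucially, the two must be balanced so that the \emph{number of mismatches} in round one, not merely the size of the ambiguity set, is $o(\epsilon'_n n)$, since wrong labels fed into the enlarged seed set must be few enough to stay inside the round-two typicality slack. The factor $2$ in $\tfrac{2\log n}{I}$ is exactly what the $\Theta(n^2)$-term union bound over all (vertex, candidate) pairs demands. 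Subsidiary points are the careful justification of coordinate independence of the signatures and making all $o(1)$ terms uniform over $\mathcal{P}_n$.
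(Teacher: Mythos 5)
Your proposal follows essentially the same route as the paper: bound the expected size of the round-one ambiguity set by the sum of an atypicality term (controlled via $\epsilon_n=\omega(1/\sqrt{\Lambda_n})$ and a concentration bound) and a false-collision term $n^2 2^{-\Lambda_n(I(X_1;X_2)-\delta)}$ (killed by $\Lambda_n\geq \frac{2\log n}{I(X_1;X_2)}$), then conclude via concentration that $|\mathcal{L}^2_n|=o(n)$ and finish with a second typicality-matching round using the $\Theta(n)$-sized enlarged seed set, where collisions occur with probability $2^{-\Theta(n)}$. The one substantive difference is that you explicitly track the round-one \emph{mismatches} (vertices with a unique but incorrect typical match) and argue they are few enough to be absorbed by the round-two typicality slack $\epsilon'_n$ --- a point the paper's proof passes over silently --- so your version is, if anything, the more careful of the two.
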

To prove Theorem \ref{th:seeded}, we need the following lemma on the cardinality of $\mathcal{L}^2_n$:
 \begin{Lemma}
 The following holds:
 \begin{align*}
  P(|\mathcal{L}^2_n|> \frac{2n}{\Lambda_n\epsilon^2})\to 0, \text{ as } n\to \infty,
\end{align*}
\end{Lemma}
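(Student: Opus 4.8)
The plan is to bound $\mathbb{E}[|\mathcal{L}^2_n|]$ and then apply Markov's inequality. Write $|\mathcal{L}^2_n|=\sum_{i}\mathbbm{1}(v^2_{n,i}\in\mathcal{L}^2_n)$, the sum running over the unmatched vertices of $g^2_{n,p_{n,2}}$. A vertex $v^2_{n,i}$ with true match $v^1_{n,j(i)}$ fails to receive a unique typical match in the first step only through one of two events: the event $\mathcal{A}_i$ that the correct signature pair $(\underline{F}^1_{n,j(i)},\underline{F}^2_{n,i})$ is \emph{not} jointly $\epsilon$-typical, or the event $\mathcal{B}_i$ that some wrong vertex $v^1_{n,j'}$, $j'\neq j(i)$, has a signature $\underline{F}^1_{n,j'}$ that \emph{is} jointly $\epsilon$-typical with $\underline{F}^2_{n,i}$. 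Hence $\mathbb{E}[|\mathcal{L}^2_n|]\le\sum_i\big(P(\mathcal{A}_i)+P(\mathcal{B}_i)\big)\le n\big(P(\mathcal{A})+P(\mathcal{B})\big)$, where $P(\mathcal{A}_i)$ does not depend on $i$ and $P(\mathcal{B}_i)$ admits a bound uniform in $i$.

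For $P(\mathcal{A})$, the CER correlation rule makes the $\Lambda_n$ coordinate pairs $\{(F^1_{n,j(i)}(l),F^2_{n,i}(l))\}_{l=1}^{\Lambda_n}$ i.i.d.\ with law $P_{n,X_1,X_2}$: each graph-$2$ edge from $v^2_{n,i}$ to a seed is matched to the graph-$1$ edge from $v^1_{n,j(i)}$ to the corresponding reverse seed, and distinct seeds give distinct, hence independent, edges. Failure of joint $\epsilon_n$-typicality is then a large-deviation event for empirical averages of bounded i.i.d.\ quantities with deviation parameter $\epsilon_n$, so a Chernoff/Hoeffding-type bound gives $P(\mathcal{A})\le c_1 2^{-c_2\Lambda_n\epsilon_n^2}$ for constants $c_1,c_2>0$ depending only on $P_{X_1,X_2}$. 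Since $\epsilon_n=\omega(1/\sqrt{\Lambda_n})$ forces $\Lambda_n\epsilon_n^2\to\infty$, this yields $\Lambda_n\epsilon_n^2\,P(\mathcal{A})\to 0$, i.e.\ $nP(\mathcal{A})=o\big(\tfrac{n}{\Lambda_n\epsilon_n^2}\big)$.

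For $P(\mathcal{B})$, observe that the true match of an unmatched vertex is never itself a seed, so for $j'\neq j(i)$ each graph-$1$ edge from $v^1_{n,j'}$ to a reverse seed is unmatched with respect to every graph-$2$ edge incident to $v^2_{n,i}$; by the CER rule, $\underline{F}^1_{n,j'}$ is therefore independent of $\underline{F}^2_{n,i}$, with i.i.d.\ $P_{X_1}$ and $P_{X_2}$ marginals. The standard joint-typicality packing bound gives $P\big((\underline{F}^1_{n,j'},\underline{F}^2_{n,i})\in A^{\Lambda_n}_{\epsilon_n}(X_1,X_2)\big)\le 2^{-\Lambda_n(I(X_1;X_2)-\delta(\epsilon_n))}$ with $\delta(\epsilon_n)\to 0$, and a union bound over the at most $n$ candidates $j'$ gives $P(\mathcal{B})\le n\,2^{-\Lambda_n(I(X_1;X_2)-\delta(\epsilon_n))}$. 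Under the seed condition $\Lambda_n\ge\tfrac{2\log n}{I(X_1;X_2)}$ of Theorem~\ref{th:seeded}, the term $nP(\mathcal{B})$ is negligible compared with $\tfrac{n}{\Lambda_n\epsilon_n^2}$.

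Combining the two estimates gives $\mathbb{E}[|\mathcal{L}^2_n|]=o\big(\tfrac{n}{\Lambda_n\epsilon_n^2}\big)$, so by Markov's inequality $P\big(|\mathcal{L}^2_n|>\tfrac{2n}{\Lambda_n\epsilon_n^2}\big)\le\tfrac{\Lambda_n\epsilon_n^2}{2n}\,\mathbb{E}[|\mathcal{L}^2_n|]\to 0$, as claimed. I expect the real obstacle to be the first-moment control of $P(\mathcal{A})$: a plain Chebyshev bound only yields $P(\mathcal{A})=O(1/(\Lambda_n\epsilon_n^2))$, whose constant is too large to be absorbed by the factor $2$ in the threshold, so one genuinely needs the exponential concentration $P(\mathcal{A})\le c_1 2^{-c_2\Lambda_n\epsilon_n^2}$ together with $\Lambda_n\epsilon_n^2\to\infty$; a secondary subtlety is the careful use of the CER edge-correlation rule to legitimize the packing bound for $P(\mathcal{B})$. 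This argument parallels the fingerprinting analysis of \cite{Allerton}.
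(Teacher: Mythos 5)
Your proof is correct, but it takes a genuinely different route from the paper's. The paper does \emph{not} use Markov's inequality with the threshold $\tfrac{2n}{\Lambda_n\epsilon^2}$ directly; it applies Chebyshev's inequality to $|\mathcal{L}^2_n|$ itself, showing $P\bigl(|\mathcal{L}^2_n|>2\mathbb{E}|\mathcal{L}^2_n|\bigr)\le \mathrm{Var}(|\mathcal{L}^2_n|)/\mathbb{E}^2(|\mathcal{L}^2_n|)\le 1/\mathbb{E}(|\mathcal{L}^2_n|)$ via the bound $\mathrm{Var}(|\mathcal{L}^2_n|)\le\mathbb{E}(|\mathcal{L}^2_n|)$ (which tacitly treats the events $A_i=\{v^2_{n,i}\in\mathcal{L}^2_n\}$ as pairwise independent, writing $\sum_{i\ne j}P(A_i,A_j)=\sum_{i\ne j}P(A_i)P(A_j)$), and then only needs the \emph{first-moment} estimate $\mathbb{E}(|\mathcal{L}^2_n|)\le \tfrac{n}{\Lambda_n\epsilon^2}+n^2 2^{-\Lambda_n(I(X_1;X_2)-\epsilon)}$, in which the per-vertex atypicality probability is bounded by the plain Chebyshev rate $1/(\Lambda_n\epsilon^2)$ — exactly the bound you correctly observe is too weak for a pure Markov argument. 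Your version trades one assumption for another: by insisting on the exponential Hoeffding bound $P(\mathcal{A})\le c_1 2^{-c_2\Lambda_n\epsilon_n^2}$ (valid since $\Lambda_n\epsilon_n^2\to\infty$ under $\epsilon_n=\omega(1/\sqrt{\Lambda_n})$), you get $\mathbb{E}|\mathcal{L}^2_n|=o(n/(\Lambda_n\epsilon_n^2))$ and close the argument with Markov alone, entirely avoiding the second-moment computation and, in particular, the pairwise-independence step that the paper does not justify (the events $A_i$ and $A_j$ share randomness through the graph-$1$ signatures, so that step is actually the weakest point of the paper's proof). The paper's approach, in exchange, is robust to typicality notions for which only a $1/(\Lambda_n\epsilon^2)$ atypicality bound is available. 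Your diagnosis of where the difficulty sits — that Chebyshev-rate atypicality plus Markov only yields the useless constant $1/2$ — is precisely the right observation, and your decomposition into $\mathcal{A}_i\cup\mathcal{B}_i$ and the packing bound for $P(\mathcal{B})$ coincide with the paper's treatment of $B_j$ and $C_{i,j}$.
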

\begin{proof}
Formally, $\mathcal{L}^2_n$ is defined as:
\begin{align*}
 \mathcal{L}^2_n=\{v^2_{n,i}\big| \nexists! j: (\underline{F}^1_{n,j}, \underline{F}^2_{n,i})\in A_{\epsilon}^n(X_1,X_2)\}.
\end{align*}
From the Chebychev inequality, we have:
 \begin{align}
 P\left(|\mathcal{L}^2_n|>2\mathbb{E}(|\mathcal{L}^2_n|)\right)=  P\left(\big||\mathcal{L}^2_n|-\mathbb{E}(|\mathcal{L}^2_n|)\big|>\mathbb{E}(|\mathcal{L}^2_n|)\right)
 \leq \frac{Var(|\mathcal{L}^2_n|)}{\mathbb{E}^2(|\mathcal{L}^2_n|)}.
 \label{eq:cheb}
\end{align}

 Let $A_{j}$ be the event that $v^2_{n,j}\in \mathcal{L}^2_n$, then
\begin{align*}
&\mathbb{E}(|\mathcal{L}^2_n|)=
\mathbb{E}\left(\sum_{j=1}^n\mathbbm{1}(v^2_{n,j}\in \mathcal{L}^2_n)\right)=
\sum_{j=1}^nP(v^2_{n,j}\in \mathcal{L}^2_n)=\sum_{j=1}^n P(A_j)\\
 &Var(|\mathcal{L}^2_n|)= \sum_{j=1}^n P(A_j)+ \sum_{i\neq j} P(A_i,A_j)- \left(\sum_{j=1}^n P(A_j)\right)^2\\
 &= \sum_{j=1}^n P(A_j)- \sum_{j=1}^n P^2(A_j)\leq \mathbb{E}(|\mathcal{L}^2_n|).
\end{align*}
So, from \eqref{eq:cheb}, we have:
\begin{align*}
 P\left(|\mathcal{L}^2_n|>2\mathbb{E}(|\mathcal{L}^2_n|)\right)\leq \frac{1}{\mathbb{E}(|\mathcal{L}^2_n|)},
\end{align*}
which goes to 0 as $n\to \infty$ provided that $\mathbb{E}(|\mathcal{L}^2_n|) \to \infty$ (otherwise the claim is proved since $\mathbb{E}(|\mathcal{L}^2_n|)$ is finite.).
It remains to find an upper bound on $\mathbb{E}(|\mathcal{L}^2_n|)$. Let $B_j$ be the event that the signature $\underline{F}^2_{n,j}$ is not typical with respect to $P_{n,X_2}$ and let $C_{i,j}$ be the event that there exists $i\in [1,n]$ such that $\underline{F}^1_{n,j}$ and $\underline{F}^2_{n,i}$ are jointly typical with respect to $P_{n,X_1,X_2}$. Then,
\begin{align*}
 P(A_j)&\leq P\left(B_j\bigcup \left( \bigcup_{i\neq j} C_{i,j}\right)\right)\leq P(B_j)+\sum_{i\neq j} P(C_{i,j}|B_j^c)
 \stackrel{(a)}\leq \frac{1}{\Lambda_s\epsilon^2}+ n2^{-\Lambda_s(I(X_1;X_2)-\epsilon)},
\end{align*}
where (a) follows from the standard information theoretic arguments (e.g proof of Theorem 3 in \cite{Allerton}). So,
\begin{align*}
 \mathbb{E}(|\mathcal{L}^2_n|)\leq \frac{n}{\Lambda_s\epsilon^2}+ n^22^{-\Lambda_s(I(X_1;X_2)-\epsilon)}.
\end{align*}
 From $\Lambda_s>\frac{2\log{n}}{I(X_1;X_2)}$, we conclude that the second term approaches 0 as $n\to \infty$. This completes the proof of Claim 1.
\end{proof}
We proceed to prove Theorem \ref{th:seeded}:
\begin{proof}
Let $H$ be the event the algorithm fails and $K$ the event that $|\mathcal{L}^2_n|> \frac{2n}{\Lambda_n\epsilon^2}$. Then:\[
 P(H)\leq P(K)+P(H|K^c).\] From Claim 1, we know that $P(K)\to 0$ as $n\to \infty$. For the second term, let $\mathcal{L}'_n$ be the set of vertices which are not matched in the second iteration. The algorithm fails if  $\mathcal{L}'_n\neq \phi$. However, by a similar argument as in the proof of claim 1, we have:
 \begin{align*}
 P(|\mathcal{L}'_n|>\frac{1}{2} \Big| |\mathcal{L}^2_n|<\frac{2n}{\Lambda_n\epsilon^2})\to 0 \text{ as } n\to\infty.
\end{align*}
So,  $P(|\mathcal{L}'_n|=0) \to 1$ as $n\to \infty$. This completes the proof.

\end{proof}
\section{Conclusion}
We have introduced a new information theoretic framework for analyzing graph matching problems. We have used this framework to study both  graph isomorphism as well as seeded graph matching. In graph isomorphism, we have analyzed the performance of the maximum degree algorithm and rederived suffiecient conditions for successful matching under the Erd\"os-R\'enyi model. Furthermore, we have proposed a new seeded graph matching algorithm. The new algorithm operates using a method called typicality matching. We have analyzed the performance of the new seeded matching algorithms and derived sufficient conditions on the seed size as functions of graph statistics.

\bibliographystyle{unsrt}

\end{document}